\newcommand{\supp}{\mathop{\mathrm{supp}}}
\newtheorem{theorem}{Theorem}
\newtheorem{definition}{Definition}
\newtheorem{proposition}{Proposition}
\title{Identification and estimation of mediational effects of longitudinal modified treatment policies}
\author[1]{Brian Gilbert}
\author[2]{Katherine L. Hoffman}
\author[2]{Nicholas Williams} 
\author[2]{Kara E. Rudolph}
\author[3]{Edward J. Schenck}
\author[1]{Iván Díaz\thanks{Corresponding author:ivan.diaz@nyulangone.org}}
\affil[1]{Division of Biostatistics, Department of Population Health, New York University Grossman School of Medicine}
\affil[2]{Department of Epidemiology, Mailman School of Public Health, Columbia University}
\affil[3]{Department of Medicine, Division of Pulmonary and Critical Care Medicine, NewYork-Presbyterian Hospital/Weill Cornell Medical Center}
\date{\today}\usepackage{float}
  \theoremstyle{definition}
  \theoremstyle{definition}
  \newtheorem{assumptioniden}{}
 \DeclareMathOperator{\var}{\mathsf{Var}}
\renewcommand{\P}{\mathsf{P}}
\newcommand{\Q}{\mathsf{Q}}
\renewcommand{\d}{\mathsf{d}}
\newcommand{\g}{\mathsf{g}}
\newcommand{\G}{\mathsf{G}}
\renewcommand{\S}{\mathsf{S}}
\newcommand{\C}{\mathsf{K}}
\renewcommand{\H}{\mathsf{H}}
\newcommand{\indep}{\mbox{$\perp\!\!\!\perp$}}
\newcommand{\D}{\mathsf{D}}
\newcommand{\one}{\mathds{1}}
\newcommand{\E}{\mathsf{E}}
\renewcommand{\P}{\mathsf{P}}
\newcommand{\1}{\mathbb{I}}
\newcommand{\ubar}[1]{\underaccent{\bar}{#1}}
\renewenvironment{proof}{{\it Proof }}{\qed \\}
\DeclarePairedDelimiterX{\norm}[1]{\lVert}{\rVert}{#1}
\tikzset{
>=stealth',
punkt/.style={
rectangle,
rounded corners,
draw=black, very thick,
text width=6.5em,
minimum height=2em,
text centered},
pil/.style={
->,
thick,
shorten <=2pt,
shorten >=2pt,}
}
\newcommand{\Vertex}[3]%
{\node[minimum width=0.6cm,inner sep=0.05cm] (#2) at (#1) {#3};
}
\newcommand{\Vertexr}[3]%
{\node[rectangle, draw, minimum width=0.6cm,inner sep=0.05cm] (#2) at (#1) {#2};
}
\newcommand{\ArrowR}[3]%
{ \begin{pgfonlayer}{background}
\draw[->,#3] (#1) to[bend right=30] (#2);
\end{pgfonlayer}
}
\newcommand{\ArrowLW}[3]%
{ \begin{pgfonlayer}{background}
\draw[->,#3] (#1) to[bend left=30] (#2);
\end{pgfonlayer}
}
\newcommand{\ArrowL}[3]%
{ \begin{pgfonlayer}{background}
    \draw[->,#3] (#1) to[bend left=45] (#2);
  \end{pgfonlayer}
}
\newcommand{\EdgeL}[3]%
{ \begin{pgfonlayer}{background}
\draw[dashed,#3] (#1) to[bend right=-45] (#2);
\end{pgfonlayer}
}
\newcommand{\Arrow}[3]%
{ \begin{pgfonlayer}{background}
\draw[->,#3] (#1) -- +(#2);
\end{pgfonlayer}
}
\newcommand{\ArrowB}[3]%
{ \begin{pgfonlayer}{background}
    \draw[|-arcs,line width=0.4mm,shorten <= 0.3cm,shorten >= 0.3cm,#3] (#1) -- +(#2);
  \end{pgfonlayer}
}
\newcommand{\titlepaper}{Identification and estimation of mediational effects of longitudinal modified treatment policies}
\begin{document}
\title{\titlepaper}
\maketitle\begin{abstract}
{We demonstrate a comprehensive semiparametric approach to causal mediation analysis, addressing the complexities inherent in settings with longitudinal and continuous treatments, confounders, and mediators. Our methodology utilizes a nonparametric structural equation model and a cross-fitted sequential regression technique based on doubly robust pseudo-outcomes, yielding an efficient, asymptotically normal estimator without relying on restrictive parametric modeling assumptions.
We are motivated by a recent scientific controversy regarding the effects of invasive mechanical ventilation (IMV) on the survival of COVID-19 patients, considering acute kidney injury (AKI) as a mediating factor. We highlight the possibility of ``inconsistent mediation," in which the direct and indirect effects of the exposure operate in opposite directions. We discuss the significance of mediation analysis for scientific understanding and its potential utility in treatment decisions.}
\end{abstract}

\section{Introduction}\label{sec:intro}

Causal mediation analysis, or the identification of the different pathways through which an exposure or treatment (we will use the terms ``exposure" and ``treatment" interchangeably) can operate, has seen increasing interest across the scientific literature \citep{nguyen2021clarifying}. Although the concept of mediation has been active for at least a century \citep{wright1921correlation}, mediation analysis has only relatively recently been incorporated in formal causal models such as the framework of potential outcomes or structural causal models, which allow for nonparametric perspectives, or the study of mediation effects without reference to a particular (e.g., linear) statistical model \citep{imai2010general}.

Even so, the formulation of an estimand of interest in a mediation analysis is not always a simple task. The so-called ``natural" mediation effects, where one considers an intervention that partially ``blocks" the effect of an exposure by assigning mediator values to their counterfactual values under a fixed exposure level in order to isolate ``direct" and ``indirect" effects, have been well-studied, and conditions for their identification have been given \citep{RobinsGreenland92, Pearl01}. However, those identification conditions are quite strong and unlikely to be satisfied in many instances of real data-generating processes \citep{avin2005identifiability, miles2015partial}. In particular, they require the assumption of ``cross-world counterfactual independence," where counterfactual outcomes downstream of mutually exclusive counterfactual events are independent, which cannot be tested even experimentally. To analyze mediation under more reasonable assumptions, alternative estimands based on stochastic draws from the mediator distribution have been proposed and labeled ``interventional effects" \citep{vanderweele2014effect}. 

Mediation analysis in longitudinal settings is particularly challenging; some existing approaches have relied on parametric model specification \citep{vanderweele2017mediation, tai2022causal}, which leads to a lack of robustness. \cite{zheng2017longitudinal} provided a robust machine-learning approach for the estimation of mediation effects with time-varying treatments, confounders, and mediators. This line of research was advanced, for example, by \cite{diaz2023efficient} which allowed the (interventional) direct effect to include pathways through intermediate confounders, thereby allowing the indirect effect to isolate precisely those operating through the mediator and not other variables.

However, even these strategies that make use of flexible models for estimation still rely on the notion of a ``static" treatment regime, or one in which each value of treatment for all timepoints is explicitly specified in the estimand of interest, which is a significant limitation, given that real-world decision-making rarely follows this form in the longitudinal setting. That is, even an externally designed intervention is likely to be responsive, intentionally or otherwise, to variables that are encountered over the course of treatment. A few examples are the administration of antiretroviral drugs in HIV patients, which can be adapted to CD4 cell count measurements \citep{hernan2006comparison}, and the administration of antihypertensive drugs, which can be adapted to blood pressure measurements \citep{johnson2018causal}.

Aside from their lack of applicability, studies of static interventions face various statistical challenges. First, identification of the causal estimand relies on the positivity assumption (i.e., that there is a positive probability of any treatment course given any level of covariates), which is often implausible, especially for continuous treatments given at multiple timepoints. Likewise, for continuous treatments, the estimation of a dose-response curve (the function of exposure values that yields the population-average outcome conditional on receiving each value) involves complications in terms of convergence rates \citep[standard $\sqrt{n}$ rates are not possible, see][]{kennedy2017nonparametric} and of the summarization and interpretation of such a high-dimensional target parameter.

By contrast, so-called ``dynamic" treatment regimes allow exposure values to change based on observed covariates \citep{robins2004effects}. These can be generalized to ``stochastic" interventions which allow for randomness in the exposure values, even conditional on all other variables. Furthermore, it may often be of interest to consider a treatment regime that assigns treatment values depending on what treatment value would have occurred in the absence of intervention, or the ``natural" value of treatment \citep{young2014identification}. Such a regime, which we call a ``modified treatment policy," was considered in \cite{robins2004effects}, with formalization and estimation strategies for single-timepoint studies to follow in \cite{Diaz12} and \cite{Haneuse2013}. \cite{diaz2021nonparametricmtp} provide a framework for robust estimation of the effects of modified treatment policies for time-varying treatments. 

We synthesize these various strands of the statistical literature into a new strategy for mediation analysis in longitudinal settings, following \cite{diaz2020causal} and \cite{hejazi2023nonparametric}, who studied mediation analysis for single-timepoint stochastic interventions. Specifically, we consider the direct and indirect effects of longitudinal modified treatment policies under interventional mediator distributions, which have not been considered in previous literature. The methods we propose are motivated by an issue that was relevant in the early landscape of COVID-19 management: the use of invasive mechanical ventilation (IMV) in cases of acute respiratory distress syndrome (ARDS). ARDS is a form of acute hypoxic respiratory failure, which is one of the key time-varying features of a patient's course with severe COVID (see, e.g., \cite{wang2021cardiovascular}). 
In the beginning stages of the pandemic, IMV was employed early, or at a lower supplemental oxygen delivery threshold, to manage ARDS, but its usage has been associated with several iatrogenic risks such as ventilator-associated pneumonia \citep{wicky2021ventilator} and barotrauma \citep{shrestha2022pulmonary}. Of interest to the present study is acute kidney injury (AKI), a critical condition that complicates ICU stays and is associated with mortality \citep{vemuri2022association}. As more data emerged and practitioners became more familiar with the pace of disease progression, guidelines shifted towards delaying intubation due to high mortality rates among mechanically ventilated patients and potential secondary complications. To navigate these treatment dynamics, we aim to elucidate the direct and indirect effects of IMV on patient mortality with respect to the AKI pathway. This application offers a compelling case for the use of our methods in understanding the effects of time-varying treatments and the causal pathways involved. 
We emphasize the phenomenon of ``inconsistent mediation" (a situation in which the direct and indirect effects are in opposite directions). In the current setting, this would involve (on average) IMV causing death through AKI while (on average) \textit{preventing} death through other mechanisms. This dichotomy may be relevant for research and clinical practice, if interventions could be in place to mitigate IMV-induced AKI. While this is an active area of research, potential interventions include the administration of diuretics to modulate intravascular volume status \citep{grams2011fluid, glassford2011fluid} and pharmacological options as explored by \cite{pickkers2022new}. Finally, we also investigate the relationship between baseline covariates and the magnitude of these direct and indirect effects; such relationships can suggest biological pathways as well as potentially inform future treatments.

 The current paper is structured as follows. Section \ref{sec:setup} introduces the causal framework and notation we will use throughout. Section \ref{sec:learning} presents the theory and techniques we will use to learn mediational effects of longitudinal treatment policies. Section \ref{sec:aplica} gives our analysis of the intubation data, and Section \ref{sec:discussion} concludes.

\section{Model setup}\label{sec:setup}

\subsection{Notation}\label{ssec:notation}

Our notation largely mirrors that of \cite{diaz2021nonparametricmtp} and \cite{diaz2023efficient}. We consider data $X_1,\ldots, X_n$, an i.i.d. sample from a distribution $\P$. The data are longitudinal, with $X=(L_1, A_1, Z_1, M_1, L_2, \ldots, A_\tau, Z_\tau, \allowbreak
M_\tau, L_{\tau+1})$, where the variables $A$ are the exposure of interest, $L$ are covariates, $M$ are mediators, and $Z$ are intermediate confounders. Under right censoring, we let $A_t$ denote a bivariate variable that includes an indicator of censoring as well as the exposure, and let the data become degenerate for any timepoint after the censoring time, adopting the methodology described in \cite{diaz2022causalcomp}. An intermediate confounder is a variable affected by exposure and which affects the mediator and outcome; see \cite{vanderweele2014effect}. We let $Y=L_{\tau+1}$ denote the outcome of interest. In our illustrative application, $A$ will denote intubation status, $M$ is the occurrence of AKI, $Y$ is mortality/survival, and $L$ and $Z$ are covariates or lab results, either at baseline or during the hospital stay. As in our illustrative application, we will assume that the mediator takes values in a finite set; the supports of all other variables are unrestricted.

For any symbol $W$, we let $\bar W_t = (W_1, \ldots, W_t)$ (past history) and $\ubar W_t = (W_t, \ldots, W_\tau)$ (future history). We let $\bar W = \bar W_{\tau} = \ubar W_1$, the full history. We also define $H_{A,t} = (\bar L_t, \bar M_{t-1}, \bar Z_{t-1}, \bar A_{t-1})$; this is the history of all variables prior to $A_t$. Similarly, $H_{Z,t}$ is the history of all variables prior to $Z_t$, and $H_{M,t}$ and $H_{L,t}$ are defined likewise. Finally,  $\g_{A,t}(a_t \mid h_{A,t})$ is the probability density or probability mass function of treatment at time $t$, given the variable history $H_{A,t}=h_{A,t}$. Similarly, $\g_{M,t}(m_t, h_{M,t})$ gives the conditional probability mass function of the time-varying mediator.%

\subsection{Structural equation model and definition of causal effects}

We consider causal effects defined within a nonparametric structural equation model, which consists of exogenous and endogenous variables and unknown deterministic functions governing the causal relations between them, together with independence assumptions on the exogenous variables \citep{Pearl00}. We allow for arbitrary causal relations among observed variables that respect their time-ordering; this can be visualized in the saturated directed acyclic graph (DAG) given in Figure \ref{fig:dag}, reproduced from \cite{diaz2023efficient}.
 
\begin{figure}[!htb]
  \centering
  \begin{tikzpicture}
    \Vertex{-1, 0}{W}{$L_1$}
    \Vertex{1, 0}{A1}{$A_1$}
    \Vertex{2, 0}{M1}{$M_1$}
    \Vertex{3, 0}{Z1}{$L_2$}
    \Vertex{2, -1}{L1}{$Z_1$}
    \ArrowB{W}{A1}{black}
    \Arrow{A1}{M1}{black}
    \Arrow{A1}{L1}{black}
    \Arrow{M1}{Z1}{black}
    \Arrow{L1}{M1}{black}
    \Arrow{L1}{Z1}{black}
    \ArrowL{A1}{Z1}{black}

    \Vertex{5, 0}{A2}{$A_2$}
    \Vertex{6, 0}{M2}{$M_2$}
    \Vertex{7, 0}{Z2}{$L_3$}
    \Vertex{6, -1}{L2}{$Z_2$}
    \node (dots) at (8, 0) {$\cdots$};
    \Arrow{A2}{M2}{black}
    \Arrow{A2}{L2}{black}
    \Arrow{M2}{Z2}{black}
    \Arrow{L2}{M2}{black}
    \Arrow{L2}{Z2}{black}
    \ArrowL{A2}{Z2}{black}

    \ArrowB{Z1}{A2}{black}

    \Vertex{9, 0}{At}{$A_\tau$}
    \Vertex{10, 0}{Mt}{$M_\tau$}
    \Vertex{11, 0}{Zt}{$Y$}
    \Vertex{10, -1}{Lt}{$Z_\tau$}
    \Arrow{At}{Mt}{black}
    \Arrow{At}{Lt}{black}
    \Arrow{Mt}{Zt}{black}
    \Arrow{Lt}{Mt}{black}
    \Arrow{Lt}{Zt}{black}
    \ArrowL{At}{Zt}{black}
  \end{tikzpicture}
  \caption{Causal DAG for the model of time-varying mediation. For clarity, the symbol
    $\boldsymbol{\mapsto}$ is used to indicate arrows from all previous nodes (to the left)
    to all following nodes (to the right).}
  \label{fig:dag}
\end{figure}
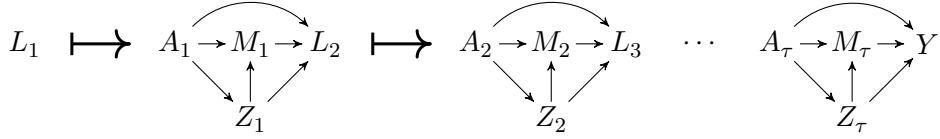

To formalize the dependence structure, we posit functions $f$ such that for each $t$, we have  $A_t=f_{A,t}(H_{A, t}, U_{A,t})$,
$Z_t=f_{Z,t}(H_{Z,t}, U_{Z,t})$, $M_t=f_{M,t}(H_{M,t}, U_{M,t})$, and
$L_t=f_{L,t}(H_{L,t}, U_{L,t})$. The variables $U=(U_{A,t}, U_{Z,t}, U_{M,t}, U_{L,t}, U_Y:t\in \{1,\ldots,\tau\})$ are unobserved ``exogenous" variables. In the general model, there are no restrictions on the distribution of $U$, but in Section \ref{sec:learning}, they will be subject to assumptions to enable identification of the estimands of interest. 

This setup allows us to speak of causal effects as the interventions on some functions $f$ while keeping the remaining structure of the model intact. For example, the simple intervention to set $A_1$ to the value $a_1$ replaces $f_{A,1}$ with the constant function $a_1$. In this paper, we are interested in the effects of interventions on the treatment variables $\bar A$, and we will analyze their decomposition through pathways that include mediating variables $\bar M$ and through pathways that do not. For this purpose, we will consider interventions on the functions $f_{A,t}$ and $f_{M,t}$ for $t=1,\ldots, \tau$.

For interventions on the treatment variable, we will consider longitudinal modified treatment policies (LMTPs) \citep{diaz2021nonparametricmtp}, which are hypothetical interventions defined in terms of a sequence
of functions $\d=(\d_1,\ldots,\d_\tau)$, where $\d_t$ is a function
$\d_t(a_t, h_{A,t}, \epsilon_t)$ that depends on a treatment value
$a_t$, a history value $h_{A,t}$, and possibly a randomizer
$\epsilon_t$. (We will not consider such a randomizer in this paper.) The intervention is defined by removing the equation
$A_t=f_{A_t}(H_{A,t}, U_{A,t})$ sequentially (from $t=1$ to $t=\tau$) from
the structural model, and replacing it with evaluations of the functions
$\d_t$ as follows. At the first timepoint, the LMTP assigns the
exposure as a new random variable
$A_1^\d = \d_1(A_1, H_{A,1}, \epsilon_1)$. This generates a
counterfactual treatment
$A_2(\bar A_{1}^\d) = f_{A_2}(H_{A,2}(\bar A_{1}^\d), U_{A,1})$ that
would have been observed if the intervention had been discontinued
right before $t=2$. Then, the intervention at $t=2$ assigns treatment
as
$A_2^\d = \d_2(A_2(\bar A_{1}^\d), H_{A,2}(\bar A_{1}^\d),
\epsilon_2)$. 

Across multiple timepoints, it is helpful to define the \textit{counterfactual} history $H_{A,t}(\bar A_{t-1}^\d)$ as the sequence of variables prior to $A_t$ that would have been observed under the LMTP given by $\bar A_{t-1}^\d$. 
In addition,
$A_t(\bar A_{t-1}^\d) = f_{A_t}(H_{A,t}(\bar A_{t-1}^\d), U_{A,t})$, termed the ``natural value of
  treatment" \citep{richardson2013single, young2014identification}, ``represents the value of treatment that would have been observed at
time $t$ under an intervention carried out up until time $t-1$ but
discontinued thereafter" \citep{diaz2021nonparametricmtp}. The intervention at time $t$ is given by
$A_t^\d = \d_t(A_t(\bar A_{t-1}^\d), H_{A,t}(\bar A_{t-1}^\d),
\epsilon_t)$.

The intervention on $\bar M$ is defined in terms of a stochastic
intervention. Specifically, let
$M_t(\bar A^\d)=f_{M,t}(A_t^\d, H_{M,t}(\bar A_t^\d), U_{M,t})$ denote
the mediator at time $t$ observed under the above LMTP, and let
$\bar J(\bar A^\d)$ denote a random draw from the distribution of
$\bar M(\bar A^\d)$, potentially conditioning on baseline variables
$L_1$. We consider interventions where $\bar M$ is replaced by this
random draw.

A causal ``effect" is typically conceived as some difference in outcome distributions under different interventions. To simplify notation throughout, we will consider two user-specified, fixed interventions $\d^\star=(\d^\star_1,\ldots, \d^\star_\tau)$ and
$\d'=(\d'_1,\ldots, \d'_\tau)$, though the generalization of the theory to arbitrary $\d$ is immediate. We define the randomized modified treatment policy
effect as
$\E[Y(\bar A^{\d'}, \bar J(\bar A^{\d'})) - Y(\bar A^{\d^\star}, \bar
J(\bar A^{\d^\star}))]$, where $Y(\bar A, \bar J)$ denotes the potential outcome of $Y$ where the treatment is intervened on to equal $\bar A$, and the mediator is intervened on to equal $\bar J$.
We decompose this into an interventional direct effect
\begin{equation}\label{eq:de}
 DE = \E[Y(\bar A^{\d'}, \bar J(\bar A^{\d^\star})) - Y(\bar A^{\d^\star}, \bar J(\bar A^{\d^\star}))]\end{equation}
 and indirect effect
 \begin{equation}\label{eq:ie}
     IE=\E[Y(\bar A^{\d'}, \bar J(\bar A^{\d'})) - Y(\bar A^{\d'}, \bar J(\bar A^{\d^\star}))],\end{equation} which sum to the ``overall" or ``total" effect comparing $\d'$ to $\d^\star$. Thus, the task at hand is to estimate expressions of the form $\E  [Y(\bar A^{\d'}, \bar J(\bar A^{\d^\star}))]$, for general $\d^\star$ and $\d'$.

For more details on the interpretation of the interventional effect, see Section \ref{ssec:interp} of the supplement, which also describes the nuance of the direct/indirect distinction for longitudinal data.

\section{Identification and estimation of interventional mediation effects}\label{sec:learning}
\subsection{Assumptions and sequential regression}

In what follows, it will be necessary to reference variable histories under interventions. For example, we let $H_{L,t}' =(\bar L_{t-1}, \bar M_{t-1}, \bar Z_{t-1}, \bar
A_{t-1}^{\d'})$. Intervened histories for other variables are defined analogously, as are counterfactual histories under the other intervention $\d^\star$, denoted with $H^\star$. Similarly, to ease notation, we let $A_t'=\d_t'(A_t, H_{A,t},\epsilon_t)$, with $A_t^\star$ defined similarly.

To identify the parameter, $\E[Y(\bar A^{\d'}, \bar J(\bar A^{\d^\star}))]$, the following assumptions, which combine elements of the identification assumptions from \cite{diaz2021nonparametricmtp} and \cite{diaz2023efficient}, are sufficient. Note that \ref{ass:iden} involves future histories rather than single timepoints only.

\begin{assumptioniden}[Conditional exchangeability of treatment and
  mediator]\label{ass:iden} Assume:
  \begin{enumerate}[label=(\roman*)]
  \item
    $U_{A,t}\indep (\underline U_{Z, t}, \underline U_{M, t},
    \underline U_{L, {t+1}}, \underline U_{A, t+1}) \mid H_{A,t}$ for all
    $t\in\{1,\ldots,\tau\}$; \label{ass:ceA}
  \item
    $U_{M,t}\indep (\underline U_{L, {t+1}},  \underline U_{A, t+1}, \underline U_{Z, {t+1}}) \mid H_{M,t}$ for all
    $t\in\{1,\ldots,\tau\}$. \label{ass:ceM}
  \end{enumerate}
\end{assumptioniden}

\begin{assumptioniden}[Positivity of treatment and mediator assignment]\label{ass:pos} Assume:
  \begin{enumerate}[label=(\roman*)]
  \item If $(a_t,h_{A,t})\in \supp\{A_t,H_{A,t}\}$ then
    $(\d^\star(a_t,h_{A,t}),h_{A,t})\in \supp\{A_t,H_{A,t}\}$ for
    $t\in\{1,\ldots,\tau\}$, and similarly for $\d'$; \label{ass:posA}
  \item If $h_{M,t}\in \supp\{H_{M,t}\}$ then
    $(m_t,h_{M,t})\in \supp\{M_t,H_{M,t}\}$ for
    $t\in\{1,\ldots,\tau\}$ and $m_t\in\supp\{M_t\}$. \label{ass:posM}
  \end{enumerate}
\end{assumptioniden}

If $A_t$ is not a discrete random variable, one further condition is required.

\begin{assumptioniden}[Piecewise-smooth invertibility of treatment modification]\label{ass:piece}

    If $A_t$ is not a discrete random variable, the conditional support of $A_t$ given $H_{A,t}=h_{A,t}$ admits a partition into subintervals where the restrictions of $\d'( \cdot, h_{a,t})  $ and $\d^\star( \cdot, h_{a,t})$ to each subinterval have differentiable inverse functions.  
\end{assumptioniden}

Our identification strategy makes use of a representation of longitudinal causal estimands by means of ``sequential regression" as proposed by \cite{Bang05} and further investigated by \cite{van2012targeted, luedtke2017sequential}, and \cite{rotnitzky2017multiply}. This approach considers a recursive regression of outcomes (or pseudo-outcomes) onto previous variables, starting at $t=\tau$ and continuing backward to $t=0$. As in \cite{diaz2023efficient}, we split the recursion over the two non-interventional variables $L$ and $Z$ as follows. We consider fixed, pre-specified interventions $\d'$, $\d^\star$, and $\bar m$. Starting with $\Q_{Z,\tau+1}= Y$, define
\begin{align}
  \Q_{L,t}(\bar h_{M,t},\ubar m_t)&= \E[\Q_{Z, t+1}(A{'}_{t+1}, H_{A, t+1}, \ubar
                                    m_{t+1})\mid M_{t} = m_t,
                                    H_{M,t} = h_{M,t}]\label{eq:qM}\\
  \Q_{Z,t}(a_t, \bar h_{A,t}, \ubar m_t) &= \E[\Q_{L,t}(\bar H_{M,t},\ubar m_t)\mid A_t =
                       a_t ,H_{A,t} = h_{A,t}].\label{eq:qZ}
\end{align}
Similarly, let $\Q_{M,\tau + 1} = 1$ and recursively define
\[\Q_{M, t}(a_t, \bar h_{A,t}, \ubar m_t)
  =\E[\one\{M_t = m_t\}\Q_{M, t+1}(A_t^\star, \bar H_{A,t+1}, \ubar
  m_{t+1})\mid A_t=a_t, H_{A,t}=h_{A,t}]. \]

When the meaning is clear, noting that the only non-random components of the functions $\Q$ are the user-specified $\d', \d^\star$, and $\bar m$, we may use the abbreviations $\Q_{L,t}'(\ubar m_t) = \Q_{L,t}(\bar H_{M,t},\ubar m_t)$, $\Q_{Z,t}'(\ubar m_t) =  \Q_{Z,t}(A^{'}_t, \bar H_{A,t}, \ubar m_t)$, and $\Q_{M, t}^\star(\ubar m_t) = \Q_{M, t}(A_t^\star, \bar H_{A,t}, \ubar
m_t)$. We give the following identification result, which is a generalization of Theorem 1 of \cite{diaz2023efficient}.

\begin{theorem}[Identification]\label{theo:iden}
  Under Assumptions~\ref{ass:iden} and~\ref{ass:pos}, the
  interventional parameter
  $\theta=\E[Y(\bar A^{\d'}, \bar J(\bar A^{\d^\star}))]$ is
  identified as
  $\theta = \sum_{\bar m} \varphi(\bar m)\lambda(\bar m)$, where
  $\varphi(\bar m)=\Q_{L,0}'(\bar m)$ and
  $\lambda(\bar m)=\Q_{M, 0}^\star(\bar m)$.
\end{theorem}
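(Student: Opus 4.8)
The plan is to reduce the single cross-world estimand $\theta$ to two ordinary single-world g-computation problems, identify each by backward induction, and recombine. First I would exploit the defining construction of $\bar J$. Because $\bar J(\bar A^{\d^\star})$ is a fresh draw from the law of $\bar M(\bar A^{\d^\star})$, generated with auxiliary randomness that shares no exogenous noise with the structural equations producing the outcome under $\d'$, conditioning on the realized value $\bar J=\bar m$ detaches the imposed mediator value from the rest of the $\d'$-world. This independence is exactly the feature that circumvents the cross-world assumptions required for natural effects, and it lets me write
\[
\theta = \sum_{\bar m} \E\!\left[Y(\bar A^{\d'}, \bar m)\right]\,\P\!\left(\bar M(\bar A^{\d^\star}) = \bar m\right),
\]
with the conditioning on $L_1$ (when the draw conditions on baseline) absorbed at the final step of the recursion. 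This reduces the task to identifying the outcome factor and the mediator-law factor separately.

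Next I would identify the outcome factor $\E[Y(\bar A^{\d'}, \bar m)] = \varphi(\bar m)$ by backward induction from $t=\tau$ down to $t=0$, with base case $\Q_{Z,\tau+1}=Y$. At each step I invoke treatment exchangeability (Assumption~\ref{ass:iden}\ref{ass:ceA}) together with positivity (Assumption~\ref{ass:pos}\ref{ass:posA}) to replace a counterfactual conditional mean by the observed-data regression evaluated at the intervened treatment value $A_t'$; Assumption~\ref{ass:piece} supplies the change of variables needed when $A_t$ is continuous. The recursion alternates between conditioning on the mediator history (producing $\Q_{L,t}$, equation~\eqref{eq:qM}) and on the treatment history (producing $\Q_{Z,t}$, equation~\eqref{eq:qZ}), so that the induction hypothesis ``$\Q_{Z,t+1}$ equals the counterfactual mean of $Y$ from $t+1$ onward under $\d'$ with mediators fixed to $\ubar{m}_{t+1}$'' propagates downward and terminates at $\varphi(\bar m)=\Q_{L,0}'(\bar m)$.

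I would then identify the mediator-law factor $\P(\bar M(\bar A^{\d^\star}) = \bar m) = \lambda(\bar m)$ by an analogous backward induction, now using mediator exchangeability (Assumption~\ref{ass:iden}\ref{ass:ceM}) and positivity (Assumption~\ref{ass:pos}\ref{ass:posM}), with base case $\Q_{M,\tau+1}=1$. Each step peels off a factor $\one\{M_t=m_t\}$ whose conditional expectation is the mediator probability mass function, while treatment follows $\d^\star$ (hence $A_t^\star$); the induction shows $\Q_{M,t+1}$ equals the counterfactual probability that $\ubar{M}_{t+1}(\bar A^{\d^\star})=\ubar{m}_{t+1}$, terminating at $\lambda(\bar m)=\Q_{M,0}^\star(\bar m)$. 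Multiplying the two identified factors and summing over $\bar m$ yields the claim.

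The main obstacle is the bookkeeping in these inductions once the mediator is held fixed at $\bar m$ while treatment evolves under the natural-value-of-treatment mechanism: since $A_t(\bar A_{t-1}^{\d})$ depends on a counterfactual history containing earlier mediators, now set to $\bar m$, I must verify at each step that the independence actually being used is one of those granted by Assumption~\ref{ass:iden}, stated conditionally on the correct (possibly counterfactual) history, and that positivity guarantees the conditioning events have positive probability. Confirming that the two recursions genuinely decouple --- that the alternation over $L$ and $Z$ in the outcome arm never entangles with the mediator-law arm --- is where the argument must be most careful.
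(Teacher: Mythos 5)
Your opening decomposition is exactly the paper's first step, and your overall architecture (identify $\varphi(\bar m)=\E[Y(\bar A^{\d'},\bar m)]$ and $\lambda(\bar m)=\P(\bar M(\bar A^{\d^\star})=\bar m)$ separately, then recombine) also matches. Where you diverge is in how each factor gets identified: you propose to rerun the exchangeability-plus-positivity backward induction from scratch, whereas the paper relabels variables so that each factor becomes a standard LMTP estimand --- for $\varphi$ it interleaves treatments and mediators into a single ``treatment'' sequence $E=(A_1,M_1,\ldots,A_\tau,M_\tau)$, the static assignment $M_t=m_t$ being a degenerate LMTP, and for $\lambda$ it treats $\one\{\bar M=\bar m\}$ as a projection of a relabeled outcome --- and then invokes the known identification result for longitudinal modified treatment policies. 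That relabeling device is precisely what dissolves the ``main obstacle'' you flag at the end: once the mediator intervention is absorbed into the policy on $E$, the dependence of the natural value of treatment on counterfactual mediator history is handled by the cited LMTP theorem, and no cross-arm bookkeeping remains.

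There is, however, a genuine gap in your sketch: the assumptions are misallocated between the two arms, and as written the pivotal step of each induction is unlicensed. The outcome factor $\varphi(\bar m)$ involves an intervention on \emph{both} the treatments (via $\d'$) and the mediators (set to $\bar m$); the recursion step~\eqref{eq:qM} conditions on $M_t=m_t$ and reads this regression as the counterfactual mean under ``set $M_t$ to $m_t$.'' That replacement is exactly what mediator exchangeability, Assumption~\ref{ass:iden}\ref{ass:ceM}, and mediator positivity, Assumption~\ref{ass:pos}\ref{ass:posM}, license (intermediate confounders $Z_t$ are the threat they rule out, and positivity makes the conditioning event non-null), yet you invoke only \ref{ass:iden}\ref{ass:ceA} and \ref{ass:pos}\ref{ass:posA} for this arm. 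Conversely, the mediator-law factor $\lambda(\bar m)$ involves \emph{no} intervention on the mediators --- the indicator $\one\{M_t=m_t\}$ sits inside the expectation, not in an intervention or a conditioning event --- and what must be justified there is replacing the counterfactual treatment assignment under $\d^\star$ by observed-data regressions, which requires treatment exchangeability \ref{ass:iden}\ref{ass:ceA} and positivity \ref{ass:pos}\ref{ass:posA}; the mediator conditions \ref{ass:iden}\ref{ass:ceM} and \ref{ass:pos}\ref{ass:posM}, which you cite for this arm, do no work here. The correct allocation is therefore: $\varphi$ needs \ref{ass:iden}\ref{ass:ceA}, \ref{ass:iden}\ref{ass:ceM}, \ref{ass:pos}\ref{ass:posA}, \ref{ass:pos}\ref{ass:posM} (plus Assumption~\ref{ass:piece} for continuous treatment), while $\lambda$ needs \ref{ass:iden}\ref{ass:ceA} and \ref{ass:pos}\ref{ass:posA}. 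The very form of the assumptions corroborates this: \ref{ass:iden}\ref{ass:ceA} includes independence from future mediator noises $\ubar U_{M,t}$ because in the $\lambda$ arm the mediators are part of the outcome, whereas \ref{ass:iden}\ref{ass:ceM} omits future mediator noises because in the only arm where it is used the future mediators are set to constants. Since the entire content of an identification theorem is which assumption licenses which step, this swap is not cosmetic: carried out as described, your $\varphi$ induction stalls at every mediator step and your $\lambda$ induction stalls at every treatment step.
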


\begin{proof}
See Section \ref{sup:ident} of the supplement.
\end{proof}

\subsection{Inverse probability weighted identification}\label{sec:ipw-iden}

As is often the case in causal inference problems, it is also possible to identify effects of interest by using the treatment rather than the outcome mechanism. Let
$\g_t'(a_t \mid h_{A,t})$ and $\g_t^\star(a_t \mid h_{A,t})$ denote
the density functions of $A'_t=\d'(A_t, H_{A,t},\epsilon_t)$ and
$A^\star_t=\d^\star(A_t, H_{A,t},\epsilon_t)$, respectively,
conditional on $H_{A,t}=h_{A,t}$. Then we define
\[\G_{A,t}'(A_t, H_{A,t})=\frac{\g_t'(A_t \mid
    H_{A,t})}{\g_t(A_t \mid
    H_{A,t})},\,\,\G_{A,t}^\star(A_t, H_{A,t})=\frac{\g_t^\star(A_t \mid
    H_{A,t})}{\g_t(A_t\mid
    H_{A,t})},\,\,\,
  \G_{M,t}(H_{M,t},m_t)=\frac{\one\{M_t=m_t\}}{\g_{M,t}(M_t\mid
    H_{M,t})},\] as well as $\C_{l,u}' = \prod_{r=l}^{u}\G_{A,r}'$,
$\C_{l,u}^\star = \prod_{r=l}^{u}\G_{A,r}^\star$,
$\H_{l,u} = \prod_{r=l}^{u}\G_{M,r}$. (For notational brevity, we have omitted the
dependence on $A_t$, $H_{A,t}$, $H_{M,t}$, and $m_t$.) Then, the parameters from Theorem \ref{theo:iden} can also be identified as $\varphi(\bar m)=\E\{\C_{1,\tau}'\H_{1,\tau} Y\}$;
$\lambda(\bar m) = \E\{\C_{1,\tau}^\star \one\{\bar M = \bar m\}\}$; see Section \ref{sup:ipw} of the supplement for the derivation.

\subsection{Overview of estimation}

The foregoing analysis allows us to construct estimators of the parameter $\theta$ given in Theorem~\ref{theo:iden}. When using machine learning for estimation of the outcome and treatment mechanisms of the previous section, it is desirable to use doubly robust estimators, which guarantee efficiency and asymptotic normality under certain assumptions \citep{kennedy2024semiparametric}. For our problem, the construction of those estimators follows from a straightforward generalization of the methods of \cite{diaz2021nonparametricmtp} and \cite{diaz2023efficient}. These estimators leverage semiparametric theory to use both the outcome and treatment models to yield a combined estimator which is consistent if either one is correctly specified at each timepoint, and which can achieve $\sqrt{n}$-rate convergence even if the individual estimators converge at a slower rate. Please see Section \ref{ssec:estim} of the supplement for a complete description of the estimators.
Our estimation algorithm has been implemented in the open-source R package \textit{lcmmtp} \citep{lcmmtp2023}.

\section{Motivating application}\label{sec:aplica}

\subsection{Background}

Acute respiratory distress syndrome (ARDS) is a major cause of morbidity and mortality among COVID-19 patients. Treating ARDS often requires the use of respiratory support devices, ranging from nasal cannulae and face masks to more invasive methods such as mechanical ventilation via endotracheal tubes (``invasive mechanical ventilation" or IMV)~\citep{hasan2020mortality}. Given the physiological links between the lungs and kidneys, it has been suggested that mechanical ventilation may cause acute kidney injury/failure (AKI); possible mechanisms include oxygen toxicity and capillary endothelial damage leading to inflammation, hypotension, and sepsis~\citep{durdevic2020progressive}. 
AKI, which can render kidneys incapable of appropriately clearing critical toxins from the body or maintaining appropriate blood and interstitial volume, complicates about 30\% of ICU admissions and increases the risk of in-hospital mortality and long-term morbidity and
mortality~\citep{kes2010acute}. 

In the early weeks of the COVID-19 pandemic, recommendations from international health organizations advocated for early intubation of patients in an effort to safeguard healthcare workers from contracting infection and to reduce complications resulting from late intubations when the patient is unresponsive or apneic~\citep{papoutsi2021effect}. Over time, however, as clinicians became more familiar with the progression of the disease, guidance evolved toward postponing intubation, partially driven by reports showing high mortality rates for mechanically ventilated patients, potentially attributable to heightened risk of secondary infections, lung injury, and damage to other organs including the kidneys, due to ventilation~\citep{tobin2006principles, bavishi2021timing}.

We illustrate our methods in this application with the goal of providing insight into effects operating through the pathway IMV$\to$AKI$\to$Death and effects operating directly through IMV$\to$Death. We answer the following question, which has been cited by an expert panel on lung-kidney interactions in critically ill patients as an area that demands further research \citep{joannidis2020lung}:
\textit{What is the effect of invasive mechanical ventilation on
death among COVID-19 patients, and how much of it operates through causing acute kidney injury?}

We use a dataset consisting of approximately 3,300 patients who did not have a previous history of chronic kidney disease (CKD) and who were hospitalized with COVID-19 at the NewYork-Presbyterian Cornell, Queens, or Lower Manhattan hospitals between March 3\textsuperscript{rd} and May 15\textsuperscript{th}, 2020. The analytical dataset was created
in a two-step approach. 
First, data pertaining to demographics, comorbidities, intubation, mortality, and discharge were gathered from electronic health records chart reviews and stored in a secure REDCap database~\citep{goyal2020clinical}.
To this data were added items from the Weill Cornell Critical carE Database for Advanced Research (WC-CEDAR), 
a data repository containing patient information collected over the course of care, including procedures, diagnoses, medications, and laboratory data~\citep{schenck2021critical}. Note that while all patients were hospitalized with COVID-19, they did not all necessarily suffer from ARDS specifically.

This data was previously described and studied in \cite{diaz2022causalcomp} and \cite{hoffman2024studyingcontinuoustimevaryingandor}. The latter considered only effects on mortality without regard to AKI. The former treated AKI and other-cause mortality as competing risks; we take a different perspective by considering AKI as a mediator in the pathway from IMV to mortality. This orientation presents a more intricate view of the causal connections among these variables. Essentially, AKI and mortality are not only competing risks; they can jointly occur, with AKI potentially influencing mortality. By examining AKI's mediating role, we aim to unravel the indirect ways in which IMV may contribute to mortality via AKI. Should this causal pathway exist, it could provide novel opportunities for medical interventions aimed at preventing or treating AKI, subsequently reducing the mortality rate. (As mentioned in Section \ref{sec:intro}, these options include the modulation of intravascular volume status \citep{grams2011fluid, glassford2011fluid} and various pharmacological interventions \citep{pickkers2022new}.) Further, we analyze how various baseline variables are associated with the estimated direct and indirect effects and discuss how such considerations might inform treatment decisions regarding IMV.

\subsection{Time-varying variables, mediator, and modified treatment policy}

This study uses data at the daily level, and study time begins on the day of hospitalization. The treatment variable is categorized into three levels: no supplemental oxygen, supplemental oxygen not including IMV, and IMV. The main goal is to estimate the overall causal
effect of IMV on mortality/survival rates and to decompose it into effects operating through AKI and effects operating independently of it. To measure the total effect of IMV on survival, we posit an intervention that would assign a noninvasive type of oxygen support to patients requiring oxygen support. This relatively modest intervention (as opposed to one that requires everyone to receive the same level of oxygen support) is motivated by a desire to preserve positivity  (\ref{ass:pos},\ref{ass:posA}) and avoid extreme inverse propensity score weights.
In actuality, 381 of the patients received intubation during the time period considered.

Formally, for $A_t\in\{0,1,2\}$ corresponding to the end of day $t$,  $0$ indicates no supplemental
oxygen, $1$ indicates supplemental oxygen not
including IMV, and $2$ indicates IMV. We consider the following
intervention:
\begin{equation}\label{eq:exmtp}
  \d_t(a_t) =
  \begin{cases}
    1 &\text{ if } a_t=2\\% 
    a_t & \text{ otherwise.}
  \end{cases}
\end{equation}%
In words, this means that the intervention leaves the observed treatment value unchanged, except for days when the patient was intubated. For those days, the intervention changes the treatment to supplemental oxygen \textit{without} IMV. 

In constructing the analytical dataset, it is important that the variables preserve the time-ordering of the DAG in Figure~\ref{fig:dag}. Therefore, we constructed the dataset using the following procedure. First, we categorize patients into three groups: those whose first event was an intubation, those whose first event was AKI, and those who had neither event. 
For patients in the first group, we anchor the creation of their record at the time at which the intubation event occurred, denoted $t_I$. %
The variables $L_{t_{I-1}}$ and $Z_{t_{I-1}}$ are then recorded as those observed in approximately 24-hour windows before $t_I$, where $L_{t_{I-1}}$ contains variables measured in the first half of this approximate 24-hour interval, and $Z_{t_{I-1}}$ contains variables measured in the second half of this approximate 24-hour window. The time interval is approximate because the difference in hours between hospital admission and intubation is not necessarily a multiple of 24. The mediator $M_{t_{I-1}}$ is then set as $0$, and the exposure $A_{t_{I-1}}$ is recorded as the respiratory support in the previous 24-hour window. This process is iterated until we have data for all the timepoints prior to $t_I$. For times greater than or equal to $t_I$, we follow a similar approach, but in addition to capturing $Z_t$, $L_t$, and $A_t$, we measure whether the patient was diagnosed with AKI at every time window. AKI was defined using creatinine values in accordance with the ``Kidney Disease: Improving Global Outcomes" (KDIGO) definition \citep{khwaja2012kdigo}. Either of the following criteria was required: (a) serum creatinine change of greater than or equal to 0.3 mg/dL within 48 h, or (b) serum creatinine greater than or equal to 1.5 times the baseline serum creatinine known or assumed to have occurred within the past 7 days.

If and when a patient is diagnosed with AKI (call this time $t_{M}$), we begin a new 24-hour interval, i.e., $L_{t_M}$, to ensure that the temporality of variables measured after $t_M$ is maintained. A similar procedure was used for patients whose first event was the diagnosis of AKI. Data for patients in the third group, who had neither event, was simply divided into approximate 24-hour time intervals using the anchor times of hospital admission and censoring or death date. During this process, we ensured that measurement of $L_t$ precedes $Z_t$ within each time interval. A GitHub repository of analysis code is available at \url{https://github.com/CI-NYC/lcmmtp-application}. For more details, including the list of measured covariates, see Section \ref{ssec:data} of the supplement.

\subsection{Results}\label{sec:res}

We used our proposed methods to estimate the total, direct, and indirect effects on mortality of invasive mechanical ventilation through acute kidney injury. All regressions were estimated using \textit{lcmmtp} \citep{lcmmtp2023} and Super Learner \citep{van2007super}, building a convex combination of three predictors: generalized linear models with $\ell_1$ penalty, multivariate adaptive regression splines, and extreme gradient boosting of regression trees. To speed up computations, we used three folds for cross-fitting and used four-day survival as our outcome of interest. Some observations had large weights $\C_{t,s}$ and $\H_{t,s}$, so we truncated the corresponding estimates at their 99th percentiles to avoid empirical violations of the positivity assumption. The results of the main analysis are presented in Table~\ref{tab:total}, where we contrast the mortality under an intervention to prevent censoring versus the mortality under an intervention to prevent censoring and to %
prevent intubation as discussed above. Specifically, we estimate the total effect defined as $\E[Y(\bar A^\d, \bar J(\bar A^\d))-Y(\bar A, \bar J(\bar A))]$, where $\d$ is defined in (\ref{eq:exmtp}) and $Y$ is four-day \textit{survival rate} under an intervention to prevent censoring. The indirect effect is thus defined as $\E[Y(\bar A, \bar J(\bar A^\d))-Y(\bar A, \bar J(\bar A)) ]$, and the direct effect is $\E[ Y(\bar A^\d, \bar J(\bar A^\d))-Y(\bar A, \bar J(\bar A^\d)) ]$.

Although there is high uncertainty in the results, the point estimates support the hypothesis that invasive mechanical ventilation reduces mortality through mechanisms other than acute kidney injury. However, the effect through acute kidney injury is larger than the effect through all other mechanisms, resulting in an overall harmful effect of invasive mechanical ventilation on mortality. 
The estimated value of the total effect indicates that %
preventing intubation
 would reduce mortality by 5.9 percentage points, with a standard error of 3.8 percentage points.

For an additional analysis of effect modification by observed covariates, see Section \ref{ssec:modify} of the supplement.

\subsection{Mediation and treatment decisions}\label{sec:decisions}

Here, we present a conjectured approach for how mediation analysis might inform treatment decisions. 
For the purpose of exposition, consider a setting with a single timepoint where we label the observed data as $O=(W, A, M, Y)$, where $A$ is intubation status, $M$ is the occurrence of AKI, $W$ are a set of pre-treatment confounders, and $Y$ is survival. In studies of medical interventions, it is often of interest to decide an optimal treatment rule; that is, a function $d(W)$ mapping covariates to treatment values that maximize the probability of survival of a patient presenting with measured variables $W$. However, in the case of critical care for a relatively new and evolving disease, doctors would likely prefer to incorporate contextual indicators, patient-specific factors, and clinical judgment rather than to strictly rely on a mathematical rule constructed from previous data. (See \cite{gallifant2022artificial} for a critical review of existing academic literature promoting the use of artificial intelligence for mechanical ventilation decisions and \cite{mathur2019artificial} for an overview of professional challenges that ICU physicians might face in applying algorithmic systems to their practice.) On the other hand, a mediation analysis of the kind we are proposing may be useful in providing a \textit{partial} treatment rule, insofar as it might identify a subset of patients who should \textit{not} receive intubation (i.e., those whose respiratory failure might recover without IMV and for whom IMV would introduce unnecessary risks), while leaving undecided the treatment given to patients outside this subset.

Specifically, we note again that the overall effect of intubation on survival can be represented as a combination of the effects through the mediating variable (AKI), and ``direct" effects that do not operate through the mediator. (Because we are estimating ``interventional" rather than ``natural" effects \citep{vanderweele2014effect, miles2015partial}, this ``overall effect" is not quite equal to the ordinary average treatment effect of intubation on survival, but it can be thought of in a similar way as a heuristic approximation; as described in Section \ref{sec:intro}, this substitution is done for the purpose of identification.)
In principle, both the indirect and direct effects could be beneficial (increasing probability of survival) or harmful (decreasing probability of survival). When these effects are in contrary directions, it may be said that the process exhibits ``inconsistent mediation" \citep{mackinnon2000equivalence}. In the current clinical context, we might discount the existence of beneficial indirect effects (i.e., that intubation causes survival by preventing AKI), even at the individual level. Then, for any individual predicted to experience a harmful \textit{direct} effect of treatment, the total effect of intubation on survival can be expected to be at least as poor. This line of reasoning provides a conservative basis for \textit{excluding} a subset of patients from treatment. 

By contrast, one could attempt to estimate the total effect for all patients and exclude any patients with a negative value, but this precludes the possibility of additional decision-making input from physicians and ignores the possibility of ameliorating the indirect effects (i.e., that additional interventions could be put in place to mitigate the probability of AKI for patients under intubation). 

While we are not advocating that such a treatment rule be applied immediately to patients similar to the ones under study, we believe the information gleaned from this analysis could be useful in future research. In particular, whether beneficial indirect effects could be plausible for some individuals, and what interventions could be put in place to prevent AKI in intubated individuals, do not appear to be well-established in the literature, but can be informed from such analyses. Aside from the current application, this framework could be useful for any treatment scenario where such conditions (the ability to mitigate harmful indirect effects and/or the absence of beneficial indirect effects) are fulfilled.

\section{Discussion}\label{sec:discussion}

Causal mediation analysis is a complex problem, especially in settings with time-varying and/or continuous exposures, mediators, and confounders; we have provided a framework and technique to handle these intricacies.
We have argued that mediation analysis can be helpful in clinical decision-making and in formulating preliminary models of causal mechanisms. In addition, we have highlighted the medical importance of inconsistent mediation and its potential role in informing critical care. For this purpose, estimating conditional effects can help to identify the relevant subgroups and to understand the biological factors driving the inconsistency.
The application to COVID-19 hospital data, while leaving much room for uncertainty, lends some credence to concerns regarding risks of early intubation through acute kidney injury. Further research might investigate approximate methods to ameliorate the computational costs associated with the sequential regression, which might grow quickly with the number of timepoints.

\begin{table}[H]
	
	\caption{Total, direct, and indirect effect of delaying invasive mechanical ventilation by one day on survival operating through acute kidney failure (presented as intervention minus baseline; positive values indicate that delaying intubation is beneficial).}
	
	\centering
	
	\begin{tabular}{rrr}
		
		\hline
		
		&Effect & Standard Error \\ 
		
		\hline
		
		Total    & 0.059 & 0.038 \\ 
		
		Direct (i.e., not through AKI)  & -0.024 & 0.039 \\ 
		
		Indirect (i.e., through AKI) &  0.083 & 0.027 \\ 
		
		\hline\label{tab:total}
		
	\end{tabular}
	
\end{table}

\section{Acknowledgments}

This work was supported through a Patient-Centered Outcomes Research Institute (PCORI) Project Program Award (ME-2021C2-23636-IC).
\clearpage
\appendix
\section*{Supplementary Materials}

\setcounter{subsection}{0} %
\renewcommand{\thesubsection}{S\arabic{subsection}} %
\setcounter{equation}{0}
\setcounter{theorem}{0}
\setcounter{section}{0}

\renewcommand{\thesection}{S\arabic{section}}
\renewcommand{\thetable}{S\arabic{table}}
\renewcommand{\thefigure}{S\arabic{figure}}
\renewcommand{\theequation}{S\arabic{equation}}

\section{Technical note on notation}

In the proofs that follow, any variable with index $t \leq 0$ should be interpreted as null, and a distribution conditional on null variables is marginal. Any summation expression with no terms (e.g., $\sum_{t+1}^{t} W_t$) is equal to zero, and any product with no factors is equal to one.

\section{Identification (Theorem \ref{theo:iden})}\label{sup:ident}

\begin{proof}
    By the law of total expectation, we can write \[\theta=\E[Y(\bar A^{\d'}, \bar J(\bar A^{\d^\star}))] \]
    \[
    =\E[\E[Y(\bar A^{\d'}, \bar J(\bar A^{\d^\star})) \mid J(\bar A^{\d^\star})]]\]
   \[ = \sum_{\bar m} \E[Y(\bar A^{\d'}, \bar J(\bar A^{\d^\star})) \mid \bar J(\bar A^{\d^\star}) = \bar m] \cdot \P(\bar J(\bar A^{\d^\star}) = \bar m)\]
   \[ = \sum_{\bar m} \E[Y(\bar A^{\d'}, \bar m )\mid \bar J(\bar A^{\d^\star}) = \bar m] \cdot \P(\bar J(\bar A^{\d^\star}) = \bar m)\]

   Since $\bar J$ is defined to be distributed like $\bar M$ and independent of all data,

   \[
   = \sum_{\bar m} \E[Y(\bar A^{\d'}, \bar m )] \cdot \P(\bar M(\bar A^{\d^\star}) = \bar m)
   \]

   Given policies  $\d'$ and $\d^\star$, we let $\varphi(\bar m) = \E[Y(\bar A^{\d'}, \bar m )]$ and $\lambda( \bar m) = \P(\bar M(\bar A^{\d^\star}) = \bar m)$ and will show that $\varphi$ and $\lambda$ are equivalent to the expressions given in the theorem.

   First, for $\varphi$, we consider the relabeled dataset where for all $t\leq \tau$, $(L_t, Z_t) = (C_{2t-1}, C_{2t})$ and $(A_t, M_t) = (E_{2t-1}, E_{2t})$. In this case, the expression $\E[Y(\bar A^{\d'}, \bar m )]$ is the expected value of the counterfactual outcome of the longitudinal modified treatment policy setting the values of $E$ to the corresponding values $\bar A^{\d'}$ and $\bar m$. Similarly to \cite{diaz2023efficient}, this expression is identified by the following strategy.

   Let $\mu_{2\tau+1} = Y$ and for $t = 2\tau, \dots, 1$ recursively define 
   \[
   \mu_{t}(e_t, h_t) = \E[\mu_{t+1}(E'_{t+1}, H_{t+1}) \mid E_t = e_t, H_t = h_t]
   \]
   where $H_t$ denotes all variables prior to $E_t$. Then by integration, $\E[Y(E^{\d'})] = \E[\mu_1(E_1', C_1)]$.

   Since the desired intervention on $\bar m$ is not stochastic (and thus the mediation component of the intervention $\d'$ can be passed into the $\Q$-functions as the future histories $\ubar m_t$), it follows that substituting variables is sufficient to recognize, when $\d'$ is the intervention setting $E$ to the values of $\bar A^{\d'}$ and $\bar m$, the relation $\E[\mu_1(E_1', C_1)] = \Q_{L,0}'(\bar m)$, as in the Theorem.

    For $\lambda(\bar m) = \P(\bar M(\bar A^{\d^\star}) = \bar m)$, we again relabel the data, this time grouping $L, Z, M$ as $S_t = (L_t, Z_{t-1}, \bar M_{t-1})$ (and again suitably altering the meaning of $H_t$ to denote all variables before $A_t$, or $H_{A,t}$ in the original labeling). Then $\bar M(\bar A^{\d^\star})$ is simply a projection of the outcome $S_{\tau+1}$ under the longitudinal modified treatment policy $\bar A^{\d^\star}$. Denoting this projection as $S_t^m = \bar M_{t-1}$, we can write $\P(\bar M(\bar A^{\d^\star}) = \bar m) = \E[\1(S_{\tau+1}^m = \bar m)]$, and we can again apply a similar identification strategy.

    As a shorthand, let $\1_{\tau+1}^{\bar m} =\1(S_{\tau+1}^m = \bar m)$. Let $\rho_{\tau+1} = \1_{\tau+1}^{\bar m}$, and for $t = \tau, \dots, 1$ recursively define 
    
   \[
   \rho_{t}(a_t, h_t) = \E[\rho_{t+1}(A^{\star}_{t+1}, H_{t+1}) \mid A_t = a_t, H_t = h_t]
   \] 

    Then by integration we have $\lambda(\bar m) = \E[\rho_1(A_1^{\star}, S_1)]$. We will show that this expression is equivalent to $\Q_{M,0}^\star(\bar m)$ as in the Theorem.

    First, by definition we have
    \[
    \rho_{\tau}(a_\tau, h_\tau) = \E[\1_{\tau+1}^{\bar m} \mid A_\tau = a_\tau, H_t = h_\tau]
    \]

The conditioning event fixes all mediator values but the final one, so we can write
    
    \[
    = \1(S_\tau^m = \bar m_{\tau-1}) \P( M_{\tau} = m_\tau \mid  A_\tau = a_\tau, H_t = h_\tau)
    \]
    \[
    = \1(S_\tau^m = \bar m_{\tau-1}) \E[\1( M_{\tau} = m_\tau) \mid  A_\tau = a_\tau, H_t = h_\tau]
    \]

    By definition,

    \[
   = \1(S_\tau^m = \bar m_{\tau-1}) \Q_{M,\tau}^\star( \ubar m_\tau)
    \]

    Then for $t=\tau-1$, we similarly have
    \[
    \rho_{\tau-1}(a_{\tau-1}, h_{\tau-1}) =  \E[\rho_\tau(A_\tau^{\star}, H_{\tau}) \mid A_{\tau-1} = a_{\tau-1}, H_t = h_{\tau-1}]
    \]
    \[
    = \E[\1(S_\tau^m = \bar m_{\tau-1})  \Q_{M,\tau}^\star( \ubar m_\tau) \mid A_{\tau-1} = a_{\tau-1}, H_t = h_{\tau-1}]
    \]
    \[
    = \1(S_{\tau-1}^{m} = \bar m_{\tau-2}) \E[ \1 (M_{\tau-1}=m_{\tau-1})  \Q_{M,\tau}^\star( \ubar m_\tau) \mid A_{\tau-1} = a_{\tau-1}, H_t = h_{\tau-1}]
    \]
    \[
    =\1(S_{\tau-1}^{m} = \bar m_{\tau-2}) \Q_{M,\tau-1}^{\star}( \ubar m_{\tau-1})
    \]
    
    Repeated iterations yield the pattern
    \[
    \rho_t(a_t, h_t) = \1(S_t^{m} = \bar m_{t-1}) \Q_{M,t}^\star(\ubar m_{t})
    \]
    Thus
    $\lambda(\bar m)=\E[\rho_1(A_1^{\star}, S_1)] = \E[\Q_{M,1}^\star(\bar m)] = \Q_{M,0}^\star( \bar m)$ as desired.
\end{proof}

\section{IPW identification}\label{sup:ipw}

\cite{diaz2021nonparametricmtp} gives, for a general longitudinal modified treatment policy $\bar A^\d$ and outcome $Y$ (without explicit mediation or intermediate confounders):

\[
\E[Y(\bar A^\d)] = \E\left[Y \prod_{t=1}^\tau \frac{g_t^\d(a_t|h_{A,t})}{g_t(a_t|h_{A,t})}\right]
\]

Applying this result after relabeling the data exactly as in the preceding Section \ref{sup:ident} (respectively for each estimand) immediately yields the given IPW expressions for $\varphi$ and $\lambda$.

\section{More on interpretation of interventional effects}\label{ssec:interp}
\subsection{Difference between interventional and natural effects}
Natural direct and indirect effects aim to decompose the total effect of an exposure \( A \) on an outcome \( Y \) through a mediator \( M \). However, when there exists an intermediate confounder \( Z \) that is affected by \( A \) and also influences both \( M \) and \( Y \), identifying these natural effects becomes mathematically infeasible.

Consider the following structural causal model (SCM), represented by the structural equations:
\begin{align*}
L &= f_L(U_L), \\
A &= f_A(L, U_A), \\
Z &= f_Z(L, A, U_Z), \\
M &= f_M(L, A, Z, U_M), \\
Y &= f_Y(L, A, Z, M, U_Y),
\end{align*}
where \( L \) represents baseline covariates, and \( U_L, U_A, U_Z, U_M, U_Y \) are independent exogenous variables. The variable \( Z \) is an intermediate confounder affected by \( A \), which also influences both \( M \) and \( Y \). This model represents a single timepoint, but similar principles apply to the longitudinal case.

The natural direct effect (NDE) and natural indirect effect (NIE) are defined as follows:

\[
\text{NDE}(a, a') = \E[Y(a, M(a'))] - \E[Y(a', M(a'))],
\]

\[
\text{NIE}(a, a') = \E[Y(a, M(a))] - \E[Y(a, M(a'))],
\]
where \( Y(a, M(a')) \) denotes the potential outcome when the exposure is set to \( a \) and the mediator is set to the value it would have taken under exposure \( a' \).

To compute \( Y(a, M(a')) \), we need to consider the outcome \( Y \) when the exposure is set to \( a \) and the mediator is set to \( M(a') \). However, \( Y(a, M(a')) \) inherently depends on both \( Z(a) \) and \( Z(a') \) because:

\[
Y(a, M(a')) = f_Y\left(f_L(U_L), a, Z(a), M(a'), U_Y\right),
\]
and
\[
M(a') = f_M\left(f_L(U_L), a', Z(a'), U_M\right),
\]
where
\[
Z(a) = f_Z\left(f_L(U_L), a, U_Z\right), \quad Z(a') = f_Z\left(f_L(U_L), a', U_Z\right).
\]

The problem arises because \( Z(a) \) and \( Z(a') \) are potential outcomes of \( Z \) under different exposures \( a \) and \( a' \), respectively, and they are functions of the same exogenous variable \( U_Z \). They are not independent, nor can the joint distribution of \( (Z(a), Z(a')) \) be identified.

Interventional direct and indirect effects offer an alternative framework that overcomes this limitation. The interventional direct effect (IDE) is defined as the difference in expected outcomes when the exposure is set to \( a \) and the mediator is intervened on to follow the \textit{distribution} it would have under exposure \( a' \), compared to when both the exposure and mediator are set to \( a' \). Mathematically, this can be expressed as:
\[
\text{IDE}(a, a') = \E[Y(a, J(a'))] - \E[Y(a', J(a'))],
\]
where \( J(a') \) represents the mediator values drawn from the distribution under exposure \( a' \).

Similarly, the interventional indirect effect (IIE) measures the difference in expected outcomes when the exposure is set to \( a \) with the mediator drawn from its distribution under exposure \( a \), versus exposure \( a \) with the mediator drawn from its distribution under \( a' \):
\[
\text{IIE}(a, a') = \E[Y(a, J(a))] - \E[Y(a, J(a'))].
\]

Defined this way, the effects \textit{are} identifiable, because $J(a)$ is defined only to come from the distribution of $M(a)$ rather than as the value of $M$ for a given individual (which would depend on that individual's value for $Z(a)$). In particular, \( \E[Y(a, J(a'))] \) does \textit{not} depend on the joint distribution of \( (Z(a), Z(a')) \) because $J(a')$ is by definition independent of other variables.

While this interventional parameter is more plausibly identifiable, it has some drawbacks in terms of interpretability. For example, it does not decompose the average treatment effect:

\[
\text{IDE}(a, a') + \text{IIE}(a, a') = \E[Y(a, J(a))] - \E[Y(a', J(a'))]
\]
which is \textit{not} necessarily equal to $\E[Y(a)] - \E[Y(a')]$. This is in contrast to the natural effects, since $\E[Y(a, M(a))] - \E[Y(a', M(a'))]$ \textit{does} equal  $\E[Y(a)] - \E[Y(a')]$ since $Y(a, M(a)) = Y(a)$.

Further, \cite{miles2023causal} has shown that without additional assumptions, the interventional effects do not necessarily equal zero when there are no individual-level effects, which also detracts from its scientific interpretability.

\subsection{Subtleties for longitudinal data}

When treatment is applied via an LMTP at multiple timepoints, the intuitive distinction between direct and indirect effects becomes somewhat more complicated. However, it can be made more concrete by recognizing the fact presented in Proposition \ref{prop:decomp} which follows below.

\begin{definition}
For a given causal DAG, we say a ``causal pathway"  from some node $A$ to some node $Y$ is a sequence of nodes $(A, V_1, V_2, \dots, V_n, Y)$ connected by arrows where a manipulation of the value of $A$ has the effect of changing the value of $Y$ by directly changing the value of $V_1$, which in turn changes the value of $V_2$, etc., up to $Y$.
\end{definition}

\begin{definition}
    A causal pathway $(A, V_1, \dots, V_n, Y)$ is ``blocked" for some comparison between interventions $\alpha$ and $\beta$ if $\alpha$ and $\beta$ have the effect that at least one of $A, V_1, V_2, \dots, V_n, Y$ has the same value under $\alpha$ and $\beta$ for each observation.
\end{definition}

For a more in-depth description of the concept of causal pathways, the interested reader may consult, for example, \cite{pearl1998graphs}.

\begin{proposition}\label{prop:decomp}
    In the structural causal model given in Figure \ref{fig:dag} of the main paper, let the nodes labeled $A$ and $M$ be denoted the ``interventional nodes." If every causal pathway from $A_i$ (for some $i$) to $Y$ has $M_j$ (for some $j$) as the last interventional node, then the natural direct effect of any LMTP is zero. 
\end{proposition}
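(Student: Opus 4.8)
The plan is to reduce the claim to an individual-level (almost sure) statement and then run a path-blocking argument along the topological order of the DAG. Writing the natural direct effect associated with any pair of policies $\d^\star,\d'$ as $\E[Y(\bar A^{\d'}, \bar M(\bar A^{\d^\star})) - Y(\bar A^{\d^\star}, \bar M(\bar A^{\d^\star}))]$, I would compare the two interventions $\alpha=(\bar A^{\d'}, \bar M(\bar A^{\d^\star}))$ and $\beta=(\bar A^{\d^\star}, \bar M(\bar A^{\d^\star}))$. The first observation is that both interventions fix every mediator node $M_j$ at exactly the same value, namely its natural value $M_j(\bar A^{\d^\star})$ under $\d^\star$; the two worlds differ only in the treatment policy applied at the $A$ nodes. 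Since the exogenous variables $U$ (and any randomizers $\epsilon$) are held common across the two worlds, it suffices to show that $Y$ takes the same value under $\alpha$ and $\beta$ for almost every realization of $U$, whence the expectation of the difference vanishes identically.

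Next I would set up a propagation argument. Fix a realization of $U$ and call a node \emph{perturbed} if its value differs between the $\alpha$-world and the $\beta$-world. The structural equations give three propagation rules: every $M_j$ is unperturbed because it is held fixed at $M_j(\bar A^{\d^\star})$ in both worlds; an $L$ or $Z$ node is perturbed only if at least one of its parents is perturbed, since its structural function and exogenous input are common to both worlds; and only the $A$ nodes can originate a perturbation, through the discrepancy between $\d'$ and $\d^\star$ (an $A$ node may in addition relay a perturbation inherited through its history $H_{A,t}$). Ordering the nodes topologically, the earliest perturbed node can have no perturbed parent, so it cannot be an $L$, $Z$, or (fixed) $M$ node; it must be an $A$ node at which the policies disagree. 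Consequently, any perturbed node is connected to such an originating $A$ node by a chain of perturbed nodes, none of which is a mediator.

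I would then close the argument by contradiction. Suppose $Y$ is perturbed. Tracing back through perturbed parents produces a causal pathway from some $A_i$ to $Y$ all of whose intermediate nodes are perturbed, hence none of which is an $M$ node. This pathway therefore contains no mediator at all, so its last interventional node is an $A$ node---contradicting the hypothesis that every causal pathway from an $A_i$ to $Y$ has some $M_j$ as its last interventional node. Hence $Y$ is unperturbed, $Y(\bar A^{\d'}, \bar M(\bar A^{\d^\star})) = Y(\bar A^{\d^\star}, \bar M(\bar A^{\d^\star}))$ almost surely, and the natural direct effect is zero. Equivalently, in the language preceding the statement, the hypothesis guarantees that every $A$-to-$Y$ pathway is blocked by a fixed mediator in the $\alpha$-versus-$\beta$ comparison.

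The main obstacle I anticipate is making the propagation lemma fully rigorous rather than merely intuitive. The delicate point is the dual role of the $A$ nodes, which both originate perturbations (via $\d'$ versus $\d^\star$) and relay them (via the natural value of treatment and its dependence on a possibly perturbed history $H_{A,t}$), whereas the fixed $M$ nodes must be shown to strictly absorb any incoming perturbation. I would also need to reconcile the informal definitions of ``causal pathway'' and ``blocked'' with ordinary directed paths, and to confirm that the trace-back along perturbed parents indeed yields a pathway in that sense, so that the contradiction with the graphical hypothesis is legitimate.
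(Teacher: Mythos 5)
Your proposal is correct and takes essentially the same route as the paper: both write the natural direct effect as the contrast $Y(\bar A^{\d'}, \bar M(\bar A^{\d^\star})) - Y(\bar A^{\d^\star}, \bar M(\bar A^{\d^\star}))$, observe that the mediator nodes take identical values in the two counterfactual worlds (same exogenous $U$, same assigned $\bar M(\bar A^{\d^\star})$), and conclude that no difference can reach $Y$ because every $A$-to-$Y$ pathway is blocked at a fixed mediator. Your contrapositive perturbation-propagation and trace-back along the topological order simply makes explicit the step the paper leaves implicit (that any difference in $Y$ must travel along some mediator-free pathway originating at a treatment node where the policies disagree), and the residual gap you flag---identifying the perturbed-parent chain with the paper's informal, functional notion of a ``causal pathway''---is present at exactly the same level of informality in the paper's own proof.
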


\begin{proof}
    Consider arbitrary LMTPs $\d_1$ and $\d_2$. Then the natural direct effect is composed of differences of the form 
    \[
    Y(\bar A^{\d_2}, \bar M (\bar A^{\d_1})) -  Y(\bar A^{\d_1}, \bar M (\bar A^{\d_1}))
    \]\\
    The two terms of this contrast arise from two causal models where the mediator variables are identical. Therefore, any pathway $A_i \rightarrow M_j \rightarrow \mathbf V \rightarrow Y$ is blocked and cannot give rise to differences in $Y$, where $\mathbf V$ is any set of nodes that does not include a treatment node. More precisely, any node after $M_j$ which is not a treatment node is, for both terms of the contrast, generated from a causal model with $M_j$ arising from $\d^1$, and therefore there can be no differences through such a path.
\end{proof}

This means, in other words, that effects that have a mediator variable as the \textit{proximate} interventional cause of the outcome are not captured in the natural direct effect. 

However, with LMTPs, there are some subtleties regarding causal pathways that do not arise with static interventions. Importantly, it is not the case that \textit{any} pathway through a mediator is not captured in the natural direct effect, because the mediator can affect later treatment nodes through the LMTP, as the following example shows.  

Consider variables $(A_1, M_1, A_2, M_2, Y)$, where $A$ are treatments, $M$ are mediators, and $Y$ is the outcome. We can consider the effect of setting $A_1$ to $1$ and letting $A_2$ arise naturally, versus allowing both treatments to arise naturally. Then the natural direct effect is, explicitly, 
\[
\E[Y(A_1 = 1, M_1(A_1), A_2(A_1 = 1, M_1(A_1 = 1)), M_2(A_1, M_1(A_1), A_2(A_1, M_1(A_1)))) - Y]
\]
which we can abbreviate for clarity as 
\[
NDE = \E[Y(A_1 = 1, M_1, A_2(A_1 = 1, M_1(A_1 = 1)), M_2) - Y]
\]
where $M_1, M_2$ are the ``natural" values of the mediator. The presence of the counterfactual expression $A_2(A_1 = 1, M_1(A_1 = 1))$ indicates that if there is a causal pathway $A_1 \rightarrow M_1 \rightarrow A_2 \rightarrow Y$, then it will appear as part of the direct effect even though it involves a mediator. 

A few more remarks are in order about this issue. First, the interventional effects we consider (as opposed to natural effects) need not respect a similar restriction to Proposition \ref{prop:decomp}. The analogous limitation is true in the single-timepoint case as well, even for static treatments \citep{miles2023causal}. Further, we have stated only a condition about when the direct effect must be zero, rather than attempting to quantify its size when it is nonzero. We conjecture that its size will correspond to the aggregation of pathways with a treatment node as the last interventional node in the pathway (and, correspondingly, that the indirect effect will correspond to the aggregation of pathways with a mediator node as the last), but precise results seem challenging for nonparametric models and require further research.

\section{More on estimation}\label{ssec:estim}
The theory of semiparametric inference allows for the analysis of estimators of  parameters (e.g., causal effects) without restricting $\P$ to a parametric model. Standard references include \cite{vanderVaart98} and \cite{Bickel97}. In our case, we are particularly interested in \textit{nonparametric} inference which places \textit{no} restrictions on $\P$ other than the identification assumptions, particularly those related to positivity.

A given estimator $\hat \theta$ of $\theta$ is consistent and asymptotically linear if for some function $\S$ (which may depend on nuisance parameters) of the observations $O$ we have

\[
\sqrt{n}(\hat \theta - \theta) = \frac{1}{\sqrt{n}} \sum_{i=1}^n \S(O_i) + o_P(1)
\]

By the central limit theorem, such an estimator converges to a normal random variable with variance $\var[\S(O)]$ at the rate of $\sqrt{n}$. The function $\S$ is called the $\textit{influence function}$ of the estimator, and this convergence allows the construction of Wald-type confidence intervals. A central object in semiparametric theory is the \textit{efficient influence function}, the variance of which provides an asymptotic lower bound on the variance of consistent regular and asymptotically linear estimators of a given parameter. Thus, if one can prove that a proposed estimator is asymptotically linear and has variance equal to the efficient influence function (this occurs when $\S$ is the efficient influence function), then that estimator is a strong candidate for use in data analyses.

A considerable body of literature has studied semiparametric estimation of total causal effects of static and dynamic interventions using sequential regression \citep[e.g.,][]{Robins00,vanderLaan2003, Bang05,
  vdl2006targeted, vanderLaanRose11,
  luedtke2017sequential,rotnitzky2017multiply, vanderLaanRose18}. Typically, the approach involves the estimation of both the regression functions (in this case, the prediction of $Z, M,$ and $L$ conditional on their past) and the probability of treatment. These procedures have the property that the rate of convergence of the estimator for the treatment effect is equal to the \textit{product} of the rates of convergence of the outcome and treatment estimators; this allows flexible machine learning models that converge at rates slower than ``parametric" $\sqrt{n}$ rate, as long as their product converges at $\sqrt{n}$ rate (e.g., each nuisance parameter can be estimated at $n^{1/4}$ rate). In particular, if either model is consistent, then the treatment effect estimator is consistent.

The requisite $n^{1/4}$-rate estimator convergence is achievable for some certain classes of flexible predictors, including LASSO \citep{bickel2009simultaneous}, highly adaptive LASSO \citep{benkeser2016highly}, random forests \citep{wager2015adaptive}, and neural networks \citep{chen1999improved}, though we have not verified this fact for each estimator implemented in Super Learner. One may also consult the discussion of convergence rates for machine learning estimators in Section 3 of \cite{chernozhukov2018double}. We also note that if (correctly specified) parametric models are used, then the assumption can be easily verified. Overall, while the theoretical convergence rates of nonparametric estimators may be difficult to verify, we recommend the doubly robust procedure we have presented because it provides the best chance for appropriate uncertainty quantification while minimizing the possibility of error due to model misspecification.

As given below, the algebraic form of the efficient influence function for our parameter is identical to that of the parameter studied in \cite{diaz2023efficient}, except that the definitions of the IPW functions ($\G'$ and $\G^\star$) and the iterative regressions ($\Q$-functions)  change to accommodate the LMTP generalization. The proofs and arguments for all the following claims mirror those of \cite{diaz2023efficient} and are therefore omitted here. To give an expression for the efficient influence function for $\theta$, we must introduce the following functions, which can be understood intuitively as doubly robust analogs of the $\Q$- functions (see \cite{rubin2007doubly}).
\begin{equation}\label{eq:dl}
\D_{L,t}(\ubar X_t, \ubar m_t) = \sum_{s=t}^\tau \C^{'}_{t+1,s} \H_{t,s}\{\Q_{Z, s+1}-\Q_{L,s}\} + \sum_{s=t+1}^\tau \C^{'}_{t+1,s} \H_{t,s-1}\{\Q_{L, s}-\Q_{Z,s}\} + \Q_{L,t} 
\end{equation}
\begin{equation}\label{eq:dz}
\D_{Z,t}(\ubar X_t, \ubar m_t) = \sum_{s=t}^\tau \C^{'}_{t,s} \H_{t,s}\{\Q_{Z, s+1}-\Q_{L,s}\} + \sum_{s=t}^\tau \C^{'}_{t,s} \H_{t,s-1}\{\Q_{L, s}-\Q_{Z,s}\} + \Q_{Z,t}
\end{equation}
\begin{equation}\label{eq:dm}
\D_{M,t}(\ubar X_t, \ubar m_t) = \sum_{s=t}^\tau \C^{*}_{t,s} [\prod_{k=t}^{s-1} \1(M_k = m_k)] [\1(M_s = m_s) \Q_{M, s+1} - \Q_{M,s}] + \Q_{M,t}
\end{equation}
Then, we have the following:

\begin{theorem}[Efficient influence function for
  $\theta$]\label{theo:eif}
  The efficient influence function for $\theta$ in the nonparametric model is given by
  \[\S(X,\eta)= \sum_{\bar m\in \bar{\mathcal M}}\left[\{\D_{Z,1}(X,\bar m;\eta)
    - \varphi(\bar m)\}\lambda(\bar m) + \{\D_{M,1}(X,\bar m;\eta)
    - \lambda(\bar m)\}\varphi(\bar m)\right]\]

    where $\bar{\mathcal{M}}$ is the space of possible values for the mediator variables.
\end{theorem}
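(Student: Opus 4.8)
The plan is to exploit the bilinear structure of the identified parameter. By Theorem~\ref{theo:iden}, $\theta = \sum_{\bar m}\varphi(\bar m)\lambda(\bar m)$, so $\theta$ is a sum of products of the two functionals $\varphi(\bar m)=\Q_{L,0}'(\bar m)$ and $\lambda(\bar m)=\Q_{M,0}^\star(\bar m)$. Since we work in the nonparametric model, the tangent space is all of $L^2_0(\P)$, so \emph{any} gradient of $\theta$ is automatically the efficient influence function, and it suffices to exhibit one gradient. Along any regular parametric submodel with score $s$, the Leibniz (product) rule gives $\frac{d}{d\epsilon}\theta(\P_\epsilon)\big|_{0} = \sum_{\bar m}\big[\dot\varphi(\bar m)\lambda(\bar m) + \varphi(\bar m)\dot\lambda(\bar m)\big]$, where $\dot\varphi(\bar m)=\E[\IF_\varphi(\bar m)\,s]$ and $\dot\lambda(\bar m)=\E[\IF_\lambda(\bar m)\,s]$ are the pathwise derivatives of the two factors. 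Consequently, once we identify the individual influence functions $\IF_\varphi(\bar m)$ and $\IF_\lambda(\bar m)$, the EIF for $\theta$ is forced to be $\sum_{\bar m}[\IF_\varphi(\bar m)\lambda(\bar m)+\varphi(\bar m)\IF_\lambda(\bar m)]$. Matching this to the statement, the whole problem reduces to showing $\IF_\varphi(\bar m)=\D_{Z,1}(X,\bar m)-\varphi(\bar m)$ and $\IF_\lambda(\bar m)=\D_{M,1}(X,\bar m)-\lambda(\bar m)$.

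For the first factor, I would reuse the relabeling from the identification proof: setting $(L_t,Z_t)=(C_{2t-1},C_{2t})$ and $(A_t,M_t)=(E_{2t-1},E_{2t})$ exhibits $\varphi(\bar m)$ as the mean of $Y$ under a longitudinal modified treatment policy on the relabeled exposures $E$ that applies $\d'$ to the treatment coordinates and deterministically fixes the mediator coordinates at $\bar m$. For such a sequential-regression functional the efficient influence function is known \citep{diaz2021nonparametricmtp}: it is a telescoping sum over the recursion stages of cumulative inverse-probability weights times successive regression residuals, plus the initial regression. Translating back through the relabeling, the treatment-side weights become the cumulative ratios $\C'$ and the fixed-mediator intervention contributes the cumulative mediator weights $\H$, reproducing the two weighted residual sums in $\D_{Z,1}$ of \eqref{eq:dz}, with leading term $\Q_{Z,1}$. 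I would then verify directly that $\E[\D_{Z,1}(X,\bar m)]=\varphi(\bar m)$ (each residual block $\{\Q_{Z,s+1}-\Q_{L,s}\}$ and $\{\Q_{L,s}-\Q_{Z,s}\}$ is mean-zero given the appropriate history once reweighted), so that $\IF_\varphi(\bar m)=\D_{Z,1}-\varphi(\bar m)$ is mean-zero and a valid gradient.

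The second factor is handled identically after the companion relabeling $S_t=(L_t,Z_{t-1},\bar M_{t-1})$ used for $\lambda$ in Theorem~\ref{theo:iden}, under which $\lambda(\bar m)=\E[\one\{\bar M(\bar A^{\d^\star})=\bar m\}]$ is the mean of an indicator outcome under the policy $\d^\star$. The same LMTP influence-function formula, now carrying the treatment weights $\C^\star$ and the nested mediator-matching indicators, yields precisely $\D_{M,1}$ of \eqref{eq:dm}, and centering gives $\IF_\lambda(\bar m)=\D_{M,1}-\lambda(\bar m)$. Substituting both influence functions into the product-rule expression from the first paragraph returns the claimed $\S$, completing the argument.

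The main obstacle I anticipate is bookkeeping rather than conceptual: one must track the cumulative-weight index ranges carefully (why $\D_{Z,t}$ begins its products at $\C'_{t,s}$ while $\D_{L,t}$ begins at $\C'_{t+1,s}$, and the shift between $\H_{t,s}$ and $\H_{t,s-1}$), and confirm that the relabeling faithfully carries the known single-sequence LMTP gradient onto the split $(L,Z)$ and mediator recursions without generating extra cross terms. I expect the interventional (stochastic) nature of the mediator intervention to be what makes this tractable: because $\bar J$ is an independent draw rather than an individual counterfactual, no cross-world quantities enter, the functional is pathwise differentiable in the nonparametric model, and the product rule applies cleanly even though $\varphi$ and $\lambda$ share nuisance components (the treatment and mediator densities).
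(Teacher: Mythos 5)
Your proposal is correct, and at the top level it follows the same route as the proof the paper relies on: the paper's proof of Theorem \ref{theo:eif} is a one-line deferral to Theorem 2 of \cite{diaz2023efficient}, which likewise starts from the bilinear representation $\theta=\sum_{\bar m}\varphi(\bar m)\lambda(\bar m)$, applies the product rule for pathwise derivatives, and uses the fact that in the nonparametric model the unique gradient is the efficient influence function. Where you genuinely diverge is in how the factor gradients $\D_{Z,1}-\varphi(\bar m)$ and $\D_{M,1}-\lambda(\bar m)$ are justified. The paper simply asserts (Section \ref{ssec:estim}) that the algebraic form of the EIF from \cite{diaz2023efficient} survives the LMTP generalization once the weights $\G'$, $\G^\star$ and the $\Q$-functions are redefined, and omits the argument; you instead make that step explicit by recycling the relabeling device from the paper's own identification proof (Section \ref{sup:ident}) to exhibit $\varphi(\bar m)$ and $\lambda(\bar m)$ as ordinary LMTP functionals, and then importing the known LMTP efficient influence function of \cite{diaz2021nonparametricmtp} and translating it back through the relabeling to recover \eqref{eq:dz} and \eqref{eq:dm}. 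This is more modular, supplies exactly the bridge the paper leaves implicit, and harmonizes with how Theorem \ref{theo:iden} is proved, at the cost of the index bookkeeping you flag. Two points to make explicit when writing it out: (i) for the $\varphi$ factor the mediator coordinates are intervened to the constant value $m_t$, and the LMTP density ratio for a constant rule on a discrete variable is $\one\{M_t=m_t\}/\g_{M,t}(m_t\mid H_{M,t})=\G_{M,t}$, which is where the $\H$ weights come from and why the finite-support and positivity conditions on the mediator (Assumption 2(ii)) are needed; (ii) your check $\E[\D_{Z,1}]=\varphi(\bar m)$ only centers the candidate --- gradient-hood (the submodel-derivative property) is what the citation to \cite{diaz2021nonparametricmtp} is actually supplying, so that property, not just the centering, should be quoted.
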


\begin{proof}
 See the proof of Theorem 2 of \cite{diaz2023efficient}.
\end{proof}

The algorithm used to estimate $\theta$ is given as Algorithm 1 in \cite{diaz2023efficient}, where in our case the IPW weights are replaced with probability ratios as described in Section \ref{sec:ipw-iden}. The algorithm consists of the following steps. First, we set $\hat \D_{Z,\tau+1}=Y$ and $\hat \D_{M, \tau+1}=1$. Then, beginning at $t=\tau$, we calculate $\hat \Q_{L,t}$ as the nonparametric regression prediction of $\hat \D_{Z,t+1}$, onto preceding variables. (The method is agnostic as to the choice of regression technique, but we recommend nonparametric or ``data-adaptive" approaches; see Section \ref{sec:res} of the main text for the specific tools we use in our data analysis.) Then, we nonparametrically estimate the functions $\g$ and $\G$ (as in Section \ref{sec:ipw-iden}), and $\hat \D_{L,t}$ is calculated by plugging in all the nuisance functions estimated so far into its definition (in equation \ref{eq:dl}). Similarly, $\hat \Q_{Z,t}$ comes from a nonparametric regression of $\hat \D_{L,t}$, and $\hat \Q_{M,t}$ comes from a nonparametric regression of $\1(M_t=m_t) \hat \D_{M,t+1}$. Then $\hat \D_{Z,t}$ is estimated from equation \ref{eq:dz} and $\hat \D_{M,t}$ is estimated from equation \ref{eq:dm}. This is repeated for $t=\tau-1, \tau-2,\dots,1$. Finally, the mean of $\hat \D_{Z,1}$ is used as an estimator for $\varphi$, and the mean of $\hat \D_{M,1}$ is taken as an estimator of $\lambda$, which are plugged into the identifying formula of Theorem \ref{theo:iden} to estimate $\theta$. (Note that the use of the $\D$-functions in this way is motivated by Lemma 1 of \cite{diaz2023efficient}.) The estimated variance of the efficient influence function (Theorem \ref{theo:eif}) is used to estimate the standard error. The procedure makes use of cross-fitting in all regression functions to ensure that the asymptotic linearity result above holds without restrictions on the complexity of the form of the machine learning estimators, thereby yielding valid statistical inference under mild conditions \citep{zheng2011cross, chernozhukov2018double}. See Section 5 of \cite{diaz2023efficient} for additional discussion of the convergence of the estimator described in that paper, which also applies to the current case.

\section{More on the structure of the observed data and intervention}\label{ssec:data}
\subsection{Measured covariates}
Baseline confounders include age, sex, race, ethnicity, body mass
index (BMI), hospital location, home oxygen status, and comorbidities
(e.g., hypertension, history of stroke, diabetes mellitus, coronary
artery disease, active cancer, cirrhosis, asthma, chronic obstructive
pulmonary disease, interstitial lung disease (ILD), HIV infection, and
immunosuppression), and are included in $L_1$.  Time-dependent
confounders $Z_t$ and $L_t$ include vital signs (e.g., highest and lowest respiratory
rate, oxygen saturation, temperature, heart rate, and blood pressure),
laboratory results (e.g., C-Reactive Protein, BUN creatinine ratio,
activated partial thromboplastin time, creatinine, lymphocytes,
neutrophils, bilirubin, platelets, D-dimer, glucose, arterial partial
pressure of oxygen, and arterial partial pressure of carbon dioxide). 

\subsection{Missingness and censoring}
In cases of
missing baseline confounders, mean substitution with an additional variable to indicate missingness is used. For missing data at
later timepoints, the last observation is carried forward. Patients
are censored at their day of hospital discharge, as AKI and vital
status were unknown after this point.  

\subsection{Time of observation}
Throughout, we have assumed that $\tau$, the length of follow-up, is equal for all observations. While this may sound like a restriction,  we do not actually assume that all units are observed for the entire length of follow-up. Censored variables are all assigned a qualitative value indicating ``censored," and variables occurring after mortality are assigned a qualitative value indicating ``mortality." For our scientific study, we have $\tau = 4$. For censored values, we adopt the procedure of including an intervention in our estimands to ``prevent censoring," which has the consequence that all outcome models at time $t$ are fitted only with units that are uncensored at time $t$.

\subsection{Censoring and survival}
Although we are mainly scientifically interested in the intervention to prevent intubation, we reiterate here that part of the intervention we consider is to ``prevent censoring"; this intervention applies to both treatment arms we are comparing (status quo and prevent intubation). What this means practically is that outcomes are estimated for censored observations based on our estimates of the nuisance functions. While this may introduce some uncertainty relative to an uncensored data set, it does not induce bias unless the censoring at random assumption is violated. While censored individuals might be thought a priori to be, for example, healthier (if they are discharged), our goal has been to control for enough baseline and time-varying covariates to make such \textit{conditional} differences minimal. Statistically, our procedure has the effect that censored observations do not provide \textit{direct} information about the effect of the intervention to prevent intubation (because they do not provide outcome data), but they do inform the estimates of nuisance functions that contribute to the estimation of the effect. \\However, we recognize that discharge could be considered a competing event and that incorporating that aspect of the problem could be a meaningful extension. Our analysis illustrates the method, but future work could extend this approach to incorporate competing risks, enhancing the interpretability of results in a substantive application. For a full-length discussion of the approach we are using for censoring, see \cite{young2020causal}.

\subsection{Survival analysis and effect parametrization}
Often, ``survival analysis" is performed by analyzing differences in hazard functions, either semi-parametrically through a proportional hazards assumption and Cox regression or through distribution-free hypothesis testing with the logrank test. Although such methods are plausible, in the context of causal inference it is more natural to define a specific parameter that quantifies the scientific effect to be estimated. In this case, the parameter is the difference in mortality rates between two different treatment regimes: one the status quo intervention, and one the intervention with prevented intubation. Thus, the form of the parameter is fundamentally no different from treatment effects in non-survival contexts except that censoring is considered. To add extra temporal dimensions to the results, one could repeat the analysis for multiple values of follow-up time  or consider alternative forms of the intervention.
\color{black}
\section{Effect modification in the AKI study}\label{ssec:modify}

Here, we aim to identify treatment effect modifiers that can be used to predict which patients are at a higher risk of a harmful effect through AKI, and which patients are likely to benefit due to all other mechanisms. 
Rigorously estimating interpretable conditional average treatment effects, or treatment effects given some covariate values, is notoriously difficult, especially when continuous covariates are considered (see for example \cite{kennedy2017non} for a review of the state-of-the-art in estimation of conditional effects). For simplicity and interpretability, we can approach this by following a strategy presented in \cite{rudolph2022ends}. Specifically, we regress the influence function (see Theorem~\ref{theo:eif}) for the interventional direct effect of intubation onto baseline covariates. (It is easily shown that the conditional (on covariates) expectation of the (uncentered) influence function for the ATE in standard settings is equal to the conditional average treatment effect; for the extension of this principle to the LMTP case see equation 6 of \cite{diaz2021nonparametricmtp}. To derive the influence function of the direct and indirect effects, note that the influence function for the difference of two parameters is equal to the difference of the influence functions of those parameters.)

Thus, we estimated univariate effect modification for each baseline variable by linear regression of the efficient influence function of the direct and indirect effect shown in equations \ref{eq:de} and \ref{eq:ie} on each variable at a time, where we use the slope of the regression as a measure of effect modification.
We show only the ten variables with the largest absolute value for univariate effect modification in Table~\ref{tab:effmod}. (If we were predominantly interested in the application of a treatment rule based on mediation effects, as described in Section \ref{sec:decisions}, then it would be more appropriate to regress the influence function onto multiple covariates simultaneously; here we focus on the simple regressions to provide a more straightforward summary.)

First, we note that this analysis is exploratory; some subsets may have relatively small sample sizes and we have not calculated standard errors. However, interstitial lung disease (ILD) is the most important variable for effect modification for both the indirect effect through AKI and the direct effect through other mechanisms. The negative effect modification parameter for ILD in Table~\ref{tab:direct} implies that preventing IMV in patients with ILD has a worse effect on survival (i.e., is more harmful/less beneficial) due to mechanisms other than AKI than in patients without ILD, whereas the positive effect modification parameter in Table~\ref{tab:indirect} means that preventing IMV in patients with ILD has a better effect on survival (i.e., is less harmful/more beneficial) due to AKI than in patients without ILD. The former effect can be explained by the fact that preventing intubation may be riskier in patients with ILD; the latter effect (smaller in size) might be due to the fact that patients with ILD are more likely to experience mortality due to lung failure rather than AKI.

Likewise, Table~\ref{tab:indirect} shows that preventing IMV in patients with cancer, chronic obstructive pulmonary disease (COPD), high lymphocyte count, costovertebral angle tenderness  (CVA), coronary artery disease (CAD), high creatinine values, and those who used home oxygen, has a better effect on survival (i.e., is less harmful/more beneficial) due to AKI than in patients without those conditions. Preventing IMV in patients who have cirrhosis or who are current smokers has a worse effect on survival (i.e., is more harmful/less beneficial) due to AKI than in patients without those conditions.

The results in Table~\ref{tab:direct} show that preventing intubation in patients who are immunosuppressed, use home oxygen, or are white, has a worse effect on survival (i.e., is more harmful/less beneficial) due to mechanisms other than AKI than in patients outside those categories. Lastly, preventing intubation in patients with high creatinine values, asthma, CVA, cirrhosis, high bilirubin, and who are former smokers, has a better effect on survival (i.e., is less harmful/more beneficial) due to mechanisms other than AKI than in patients without those conditions.

We emphasize that given the high standard errors in the effect estimates, this summary should be seen as an exploratory demonstration of our methodology; further research is needed to establish firm scientific conclusions about the effects of the timing of IMV.

\begin{table}[H]
\caption{Most important effect modifiers for the direct and indirect effect of invasive mechanical ventilation on survival through acute kidney injury, as measured by the slope of a simple linear regression of the efficient influence function. Variables with negative slopes indicate populations with lower effects (calculated as intervention minus baseline; positive values indicate that preventing intubation is relatively less harmful/more beneficial).}
\centering
\begin{minipage}{.45\textwidth}
\centering
\begin{tabular}{lr}
  \hline
Variable & Slope \\ 
  \hline
ILD & 0.34 \\ 
  Cancer & 0.32 \\ 
  COPD & 0.25 \\ 
  Lymph. count & 0.23 \\ 
  CVA & 0.16 \\ 
  CAD & 0.15 \\ 
  Current smoker & -0.13 \\ 
  Cirrhosis & -0.12 \\ 
  Creatinine & 0.10 \\ 
  Home oxygen & 0.10 \\ \hline
\end{tabular}
\subcaption{Effect modifiers for the effect of preventing IMV on survival through AKI.}\label{tab:indirect}
    \end{minipage}%
    \hspace{3mm}
    \begin{minipage}{.45\textwidth}
    \centering
\begin{tabular}{lr}
  \hline
Variable & Slope \\ 
  \hline
ILD & -1.06 \\ 
  Creatinine & 0.69 \\ 
  Asthma & 0.69 \\ 
  CVA & 0.65 \\ 
  Cirrhosis & 0.38 \\ 
  Immunosuppressed & -0.26 \\ 
  Bilirubin & 0.25 \\ 
  Former smoker & 0.22 \\ 
  Home oxygen & -0.18 \\ 
  White & -0.17 \\ 
   \hline
\end{tabular}
\subcaption{Effect modifiers for the effect of preventing IMV on survival independent of AKI.}\label{tab:direct}
\end{minipage}
\label{tab:effmod}
\end{table}

\color{black}
\section{Simulation study}

To test the accuracy of our estimator, we performed a brief Monte Carlo simulation study. For $t = 1, 2, 3$ and sample sizes $n \in \{1000, 5000\}$, we drew $J = 300$ datasets from the following data-generating mechanism, where $Y_1 = 1$.
\begin{align*}
    L_t &\sim \text{Bernoulli}(0.5) \\
    A_t &\sim \text{Bernoulli}(0.4 + 0.1\,L_t)\\
    M_t &\sim \text{Bernoulli}(0.6 - 0.1\,A_t + 0.1\,L_t)\\
    Y_{t+1} &\sim \text{Bernoulli}(Y_t\,0 + (1-Y_t)\,(0.3 + 0.05\,M_t + 0.2\,A_t - 0.001\,L_t))
\end{align*}
Note that $Y_t$ is degenerate and analogous to a survival outcome. For each dataset, we estimated $\theta(\d', \d) = \E[Y(\bar A^{\d'}, J(\bar A^\d))]$ (see main paper for an explanation of this notation) under three parameterizations of $(\d', \d)$: $\hat{\theta}(1, 1)$, $\hat{\theta}(1, 0)$, $\hat{\theta}(0, 0)$. We then estimated the randomized total, indirect, and direct effects as $\hat{\theta}(1, 1) - \hat{\theta}(0, 0)$, $\hat{\theta}(1, 1) - \hat{\theta}(1, 0)$, and $\hat{\theta}(1, 0) - \hat{\theta}(0, 0)$ respectively. The true values were estimated by drawing the counterfactuals $Y(\bar A^{\d'}, J(\bar A^\d))$ from a super-population. Estimation was performed using the \textit{lcmmtp} package in \texttt{R}; the estimator used cross-fitting ($K=5$ folds) and a generalized linear model with all possible 3-way interactions for fitting nuisance parameters. Code for the simulation is available on GitHub at \url{https://github.com/nt-williams/lcmmtp-simulation}.

\begin{table}[htbp]
\centering
\color{black}
\begin{tabular}{llcccc}
\toprule
$n$ & Effect & $\theta$ & Abs. bias & $n \cdot \text{MSE}$ & Coverage\\
\midrule
1000 & Total & -0.219 & 0.024 & 551.388 & 0.970\\
 & Indirect & 0.004 & 0.008 & 16.585 & 0.977\\
 & Direct & -0.223 & 0.032 & 503.256 & 0.967\\
\cmidrule{1-6}
5000 & Total & -0.219 & 0.002 & 4.815 & 0.937\\
 & Indirect & 0.004 & 0.000 & 0.392 & 0.923\\
 & Direct & -0.223 & 0.002 & 3.726 & 0.947\\
\bottomrule
\end{tabular}
\caption{\color{black}Summary of simulation results. The table presents the sample size ($n$), the true parameter value ($\theta$), absolute bias, the product of the sample size and mean squared error ($n \cdot \text{MSE}$), and the coverage probability for the total, indirect, and direct effect in the simulation study.}\label{tab:scenarios}
\end{table}

\section{Event counts in illustrative study}

\begin{table}[H]

\caption{\color{black}\label{tab:tab:oxygenation_levels}Distribution of Oxygenation Levels (Days 1–4)}
\centering
\begin{tabular}[t]{lrrrr}
\toprule
Day & 1 & 2 & 3 & 4\\
\midrule
No oxygen & 1267 & 470 & 188 & 102\\
Non-invasive & 1921 & 2546 & 2477 & 2184\\
Mechanical vent & 176 & 249 & 304 & 349\\
Died or discharged & 0 & 99 & 395 & 729\\
\bottomrule
\end{tabular}
\end{table}

\begin{table}[H]

\caption{\color{black}\label{tab:tab:event_counts}Event Counts by Day (AKI, Mortality, Discharged)}
\centering
\begin{tabular}[t]{lrrrr}
\toprule
Day &  1 &  2 &  3 &  4\\
\midrule
AKI & 474 & 588 & 592 & 622\\
Mortality & 24 & 72 & 59 & 72\\
Discharged & 75 & 323 & 670 & 1003\\
\bottomrule
\end{tabular}
\end{table}
\clearpage
\color{black}
\clearpage
\bibliographystyle{plainnat} \bibliography{refs}

\end{document}